\newcommand{\FF}{\mathbb{F}}
\tikzset{join/.code=\tikzset{after node path={%
\ifx\tikzchainprevious\pgfutil@empty\else(\tikzchainprevious)%
edge[every join]#1(\tikzchaincurrent)\fi}}}
\tikzset{>=stealth',every on chain/.append style={join},
         every join/.style={->}}
\tikzstyle{labeled}=[execute at begin node=$\scriptstyle,
\newtheorem{theorem}{Theorem}
\newtheorem{lemma}{Lemma}
\title{A structural attack to the DME-$(3,2,q)$ cryptosystem.}
\author{Martin Avendaño, Miguel Marco}
\begin{document}
\begin{abstract}
We present a structural attack on the DME cryptosystem with paramenters $
(3,2,q)$. The attack recovers $10$ of the $12$ coefficients of the
first linear map. We also show that, if those $12$ coefficients were known,
the rest of the private key can be efficiently obtained by solving systems
of quadratic equations with just two variables.
\end{abstract}

\maketitle

\section{Introduction}

DME stands for ``double matrix exponentiation''. It is a family of multivariate encryption primitives over finite fields, parametrized by two integers and a field size (in bits). It was developed by Luengo, and the version with parameters $(3,2,2^{48})$ was presented to the NIST call for
quantum-resistant public-key cryptographic algorithms \cite{NIST}.

The encryption map consists on a composition of three secret linear maps and two public matrix exponentiations, defining a polynomial map with very high degree, and moderated number of monomials. This map which can be inverted provided we know the secret linear maps.

During the revision process at NIST, Weullens claimed to have found an attack against DME-$(3,2,2^{48})$ consisting on using Weil descent to convert the map in a quartic one over $\mathbb{F}_2$ (with a much larger set of variables), and then decompose it in two quadratic ones. However, his claims about the complexity and/or feasibility of such an attack could not be proved (see \cite{nist-dme-comments}).

Later, Weullens proposed another attack (\cite{ward-padding-attack}, \cite{ward-slides}) against the submitted implementation  that took advantage of the lack of a proper padding. This attack used $2^{24}$ decryption queries, and analyzed the cases where the decryption resulted in an error. It was practical, but did not essentially break the underlying mathematical problem, so it could be prevented by using a secure padding.

Here we present a structural attack against DME-$(3,2,q)$ that reduces the difficulty of recovering the private key from the public one from the claimed $256$ bits to just $log_2(q^2)$. It is completely passive in the sense that it does not require any interaction, just the knowledge of the public key. The attack does not rely on any implementation details, since it considers only the underlying mathematical problem.

\section{Description of the system}

We first describe the DME cryptosystem. We focus only on the mathematical description, ignoring all the implementation details (such as how bits are transformed into elements of $F_q$, padding, etc).

\subsection{Setup}
The system requires a common setup for all users, consisting on the following:

\begin{itemize}
    \item Two positive integers $m\geq n$.
    \item A finite field $\FF_q$ of characteristic $p$.
    \item An explicit isomorphism of $\FF_q$-vector spaces $\FF_q^n \cong \FF_{q^n}$.
    \item An explicit isomorphism of $\FF_q$-vector spaces $\FF_q^m \cong \FF_{q^m}$.
    \item A $m\times m$ matrix $E$, whose nonzero entries are powers of $p$, that is invertible modulo $q^n-1$, and whose rows have only two nonzero. entries.
    \item A $n\times n$ matrix $F$, whose nonzero entries are powers of $p$, that is invertible modulo $q^m-1$, and whose rows have only two nonzero entries.
    \item A permutation map $M:\FF_q^{nm}\to \FF_q^{nm}$ such that
    $$M((\FF_q^n\setminus\{0\})^m)\subseteq (\FF_q^m\setminus\{0\})^n.$$
\end{itemize}

For the implementation submitted to the NIST call the following choices were made:

\begin{itemize}
    \item $m=3$, $n=2$.
    \item The field $\FF_q$ is the field of $2^{48}$ elements, represented as polynomials in $\FF_2[x]$ modulo the polynomial $x^{48} + x^{28} + x^{27} + x + 1$.
    \item The identification is done by considering $\FF_{q^2}$ as the polynomials in $\FF_q[T]$ modulo $T^2+a\cdot T +b$, with
    \begin{itemize}
        \item[] $a=x^{43} + x^{38} + x^{36} + x^{34} + x^{29} + x^{26} + x^{25} + x^{24} + x^{23} + x^{22} + x^{21} + x^{20} + x^{19} + x^{13} + x^{9} + x^{8} + x^{4} + x^{3} + x + 1
$ \item[] $b=x^{47} + x^{46} + x^{45} + x^{43} + x^{40} + x^{39} + x^{38} + x^{37} + x^{35} + x^{31} + x^{30} + x^{27} + x^{26} + x^{24} + x^{23} + x^{22} + x^{21} + x^{18} + x^{17} + x^{16} + x^{14} + x^{9} + x^{8} + x^{7} + x^{3} + x^{2} + 1
$
\end{itemize}and taking coordinates in the basis $(1,T)$.

    \item The identification is done by considering $\FF_{q^3}$ as the polynomials in $\FF_q[S]$ modulo $S^3+c\cdot S^2 +d\cdot S + e$ with
    \begin{itemize}
        \item[] $c=x^{43} + x^{42} + x^{41} + x^{40} + x^{38} + x^{37} + x^{36} + x^{34} + x^{33} + x^{29} + x^{26} + x^{24} + x^{22} + x^{20} + x^{19} + x^{17} + x^{15} + x^{14} + x^{13} + x^{12} + x^{11} + x^{8} + x^{5} + x^{3} + x^{2} + x$.
        \item[] $d=x^{46} + x^{45} + x^{44} + x^{41} + x^{38} + x^{37} + x^{33} + x^{32} + x^{31} + x^{30} + x^{25} + x^{21} + x^{20} + x^{17} + x^{16} + x^{15} + x^{14} + x^{12} + x^{10} + x^{9} + x^{8} + x^{7} + x^{4} + x^{3} + x^{2} + x + 1$
        \item[] $e=x^{47} + x^{46} + x^{42} + x^{39} + x^{38} + x^{35} + x^{32} + x^{26} + x^{25} + x^{24} + x^{23} + x^{20} + x^{19} + x^{17} + x^{15} + x^{14} + x^{13} + x^{12} + x^{11} + x^{9} + x^{8} + x^{6} + x^{5} + x^{2} + x$
    \end{itemize} and taking coordinates in the basis $(1,S,S^2)$.
    \item The matrix $E=\left(\begin{array}{ccc} 2^{24} &  2^{59} & 0 \\
        2^{21} & 0 & 2^{28} \\ 0 & 2^{29} & 2^{65} \end{array}\right)$
    \item The matrix $F=\left(\begin{array}{cc} 2^{50} & 2^{24} \\ 2^{7} & 2^{88}\end{array}\right)$
    \item $M$ is the identity.

\end{itemize}

\subsection{Matrix exponentiations}

The scheme makes use of a special kind of maps called matrix exponentiations. Consider a vector $\bar{x}=(x_1\ldots,x_m)\in (\FF_{q^n}^*)^m$, and the matrix $E$ defined above. We define the vector
$\bar{x}^E:=(x_1^{E_{11}}\cdot x_2^{E_{12}}\cdots x_m^{E_{1m}},\ldots,x_1^{E_{m1}}\cdot x_2^{E_{m2}}\cdots x_m^{E_{mm}})$.

Analogously, we define a map $(\FF_{q^m}^*)^n\to (\FF_{q^m}^*)^n$ by using the matrix $F$.

It is easy to check that these are polynomial maps, even considered as maps from
$\FF_q^{nm}$ to $\FF_q^{nm}$ (by composing with the isomorphisms fixed in the setup). It is also easy to check that the inverse matrices (mod $q^n-1$ and $q^m-1$ respectively) determine inverse maps.

By abuse of notation, we use the same letter to represent both the matrix and the corresponding map.
\subsection{Keys and encryption/decryption maps}

With this setup, the private key is a triple of invertible $nm\times nm$ matrices over $\FF_q$, $(L_1,L_2,L_3)$ such that $L_1$ is a diagonal sum of $m$ square blocks $L_{11},\ldots ,L_{1m}$ of size $n\times n$; $L_2$ and $L_3$ are diagonal sums of $n$ blocks  ($L_{21},\ldots,L_{2n}$ and $L_{31},\ldots,L_{3n}$ respectively) of size $m\times m$. They can be regarded as linear maps $\FF_q^{nm} \to \FF_q^{nm}$.

Once chosen these matrices, we can consider the following composition of maps
\begin{center}
\begin{tikzpicture}
\matrix (m) [matrix of math nodes, column sep=3em, row sep=3em]
{ \FF_q^{nm} & \FF_q^{nm}  & \FF_q^{nm} & \FF_q^{nm} & \FF_q^{nm} & \FF_q^{nm} \\  & \FF_{q^n}^m & \FF_{q^n}^m & \FF_{q^m}^n & \FF_{q^m}^n & \\ };
  { [start chain] \chainin (m-1-1);
      \chainin (m-1-2) [join={node[above,labeled] {L_1}}];
    \chainin (m-2-2) [join={node[left,labeled] {\cong}}];
    \chainin (m-2-3) [join={node[above,labeled] {{E}}}];
    \chainin (m-1-3) [join={node[left,labeled] {\cong}}];
    \chainin (m-1-4) [join={node[above,labeled] {L_2\circ M}}];
    \chainin (m-2-4) [join={node[left,labeled] {\cong}}];
    \chainin (m-2-5) [join={node[above,labeled] {{F}}}];
    \chainin (m-1-5) [join={node[left,labeled] {\cong}}];
    \chainin (m-1-6) [join={node[above,labeled] {L_3}}];
    }
\end{tikzpicture}
\end{center}

The result is a multivariate polynomial map from $\FF_q^{nm}$ to itself. In the expanded expression of this map, the exponents that appear at the end depend only on the entries of $E$ and $F$, hence are public. The public key will be the coefficients that appear in that expression. The structure of the system has been carefully designed to ensure that the number of monomials in this expansion is not too large.

Note that each step in the composition is invertible (assuming we stay always inside $ (\FF_{q^n}^*)^m$ and  $ (\FF_{q^m}^*)^n$ at the steps $E$ and $F$ respectively) by using the inverses of the involved matrices. That is, the whole map can be efficiently inverted if we know the private key.

\section{Malleability of the private key}
\label{sec:maleability}

In this section we show how different private keys may correspond to the same public key. This fact will be used to assume that the private key has a special form.

From now on, we assume that $n=2$ and $m=3$, as in the version submitted to the NIST.

Take $\alpha \in \FF_{q^2}^*$. The multiplication by $\alpha$ defines a map
from $\FF_{q^2}$ to itself that is $\FF_q$-linear. So there exists a matrix $H(\alpha)$ such that the corresponding linear map makes the following diagram commute.

\begin{center}
\begin{tikzpicture}
\matrix (m) [matrix of math nodes, column sep=3em, row sep=3em]
{ \FF_q^{2} & \FF_q^{2} \\  \FF_{q^{2}} & \FF_{q^{2}} \\ };
  { [start chain] \chainin (m-1-1);
      \chainin (m-2-1) [join={node[left,labeled] {\cong}}];
    \chainin (m-2-2) [join={node[above,labeled] {\cdot \alpha}}];
    \chainin (m-1-2) [join={node[left,labeled] {{\cong}}}];
    }
  { [start chain] \chainin (m-1-1);
      \chainin (m-1-2) [join={node[above,labeled] {H(\alpha)}}];
    }
\end{tikzpicture}
\end{center}

Analogously, for any $\lambda \in \FF_{q^3}^*$ there exist a matrix $G(\lambda)$ whose corresponding linear map makes the following diagram commute

\begin{center}
\begin{tikzpicture}
\matrix (m) [matrix of math nodes, column sep=3em, row sep=3em]
{ \FF_q^{3} & \FF_q^{3} \\  \FF_{q^{3}} & \FF_{q^{3}} \\ };
  { [start chain] \chainin (m-1-1);
      \chainin (m-2-1) [join={node[left,labeled] {\cong}}];
    \chainin (m-2-2) [join={node[above,labeled] {\cdot \lambda}}];
    \chainin (m-1-2) [join={node[left,labeled] {{\cong}}}];
    }
  { [start chain] \chainin (m-1-1);
      \chainin (m-1-2) [join={node[above,labeled] {G(\lambda)}}];
    }
\end{tikzpicture}
\end{center}

Clearly, $H(\alpha)^{-1}=H(\alpha^{-1})$ and $G(\lambda)^{-1}=G(\lambda^{-1})$.

\begin{lemma}
    \label{lemma:equivkeys1}
    Let $\alpha,\beta,\gamma \in \FF_{q^2}^*$ such that $\alpha^{E_{21}}\gamma^{E_{23}}=\delta\in\FF_q^*$. Then the private keys

    \begin{itemize}
        \item $(L_{11},L_{12},L_{13},L_{21},L_{22},L_{31},L_{32})$
        \item $(H(\alpha)L_{11},H(\beta)L_{12},H(\gamma)L_{13},L_{21}\left(\begin{array}{cc} H(\alpha^{E_{11}}\beta^{E_{12}})^{-1}  & \left.\begin{array}{c} 0 \\ 0 \end{array}\right. \\ \left.\begin{array}{cc} 0 & 0 \end{array}\right. & \delta^{-1}\end{array}\right),\\ , L_{22}
        \left(\begin{array}{cc}\delta^{-1} & \left.\begin{array}{cc} 0 & 0 \end{array}\right.\\ \left.\begin{array}{c} 0 \\ 0 \end{array}\right. & H(\beta^{E_{23}}\gamma^{E_{33}})^{-1} \end{array}\right),L_{31},L_{32})
        $
    \end{itemize}

correspond to the same public key.

\end{lemma}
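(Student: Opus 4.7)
The strategy is to propagate the modification of $L_1$ through the exponentiation map $E$ and the two identifications $\FF_q^6\cong (\FF_{q^2})^3$ and $\FF_q^6\cong (\FF_{q^3})^2$, and show that the induced change at the input of $L_2$ is exactly the one that the lemma absorbs into $L_{21}$ and $L_{22}$. Since $F$ and $L_3$ are untouched, the two public keys will then coincide.

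First, left-multiplying $L_{11},L_{12},L_{13}$ by $H(\alpha),H(\beta),H(\gamma)$ amounts, under the identification $\FF_q^2\cong \FF_{q^2}$, to componentwise multiplication by $(\alpha,\beta,\gamma)$ on the input of $E$ in $(\FF_{q^2}^*)^3$. With $E$ as in the NIST choice (nonzero entries in positions $(1,1),(1,2),(2,1),(2,3),(3,2),(3,3)$), an immediate calculation gives
$$(\alpha x_1,\beta x_2,\gamma x_3)^E = (\mu_1,\mu_2,\mu_3)\cdot(x_1,x_2,x_3)^E,$$
where $\mu_1=\alpha^{E_{11}}\beta^{E_{12}}$, $\mu_2=\alpha^{E_{21}}\gamma^{E_{23}}=\delta$, and $\mu_3=\beta^{E_{32}}\gamma^{E_{33}}$.

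Next, I would rewrite the componentwise scaling $H(\mu_1)\oplus H(\mu_2)\oplus H(\mu_3)$ on $\FF_q^6$, which is block-diagonal with respect to the $(\FF_{q^2})^3$ structure, in terms of the $(\FF_{q^3})^2$ structure used at the input of $L_2$. The two structures split the six coordinates as $(c_1,c_2\mid c_3,c_4\mid c_5,c_6)$ and $(c_1,c_2,c_3\mid c_4,c_5,c_6)$ respectively, so the middle block $H(\mu_2)$ straddles the $\FF_{q^3}$-split. The hypothesis $\delta\in \FF_q^*$ forces $H(\mu_2)=\delta\, I_2$, a scalar action that does not mix the two halves. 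Consequently, in the $(\FF_{q^3})^2$ coordinates the scaling becomes a block-diagonal map with $3\times 3$ blocks $\mathrm{diag}(H(\mu_1),\delta)$ and $\mathrm{diag}(\delta,H(\mu_3))$.

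Finally, right-multiplying $L_{21}$ and $L_{22}$ by the inverses of these two blocks cancels the perturbation and yields precisely the matrices in the statement. The hardest step is the previous one: one must carefully track two incompatible $\FF_q$-block decompositions of $\FF_q^6$ and recognize that $\delta\in\FF_q^*$ is exactly what makes the middle scaling compatible with the block structure of $L_2$.
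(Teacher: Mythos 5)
Your proposal is correct and follows essentially the same route as the paper's (much terser) proof: both reduce the lemma to the identity $(\alpha x_1,\beta x_2,\gamma x_3)^E=(\alpha^{E_{11}}\beta^{E_{12}}y_1,\;\delta y_2,\;\beta^{E_{32}}\gamma^{E_{33}}y_3)$ together with the commutativity of the diagrams defining $H$, and then absorb the resulting componentwise scaling into $L_{21}$ and $L_{22}$; your explicit observation that $\delta\in\FF_q^*$ forces $H(\delta)=\delta I_2$, so that the scaling is compatible with the $(\FF_{q^3})^2$ splitting despite the middle $\FF_{q^2}$-block straddling it, is exactly the detail the paper leaves implicit in ``direct consequence.'' Note also that your exponent $\mu_3=\beta^{E_{32}}\gamma^{E_{33}}$ silently corrects a typo in the lemma's statement (which writes $E_{23}$ where $E_{32}$ is meant), in agreement with the displayed formula in the paper's own proof.
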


\begin{proof}
    Is a direct consequence of the commutativity of the previous diagrams, and the fact that
    $$
    (x_1,x_2,x_3)^E=(y_1,y_2,y_3)  \Longrightarrow(\alpha x_1,\beta x_2,\gamma x_3)^E=(\alpha^{E_{11}}\beta^{E_{12}}y_1,\alpha^{E_{21}}\gamma^{E_{23}} y_2,\beta^{E_{32}}\gamma^{E_{33}}y_3)
    $$
\end{proof}

Analogously, we also have

\begin{lemma}
    \label{lemma:equivkeys2}
    Let $\lambda,\mu\in\FF_{q^3}^*$. Then the private keys
    \begin{itemize}
     \item $(L_{11},L_{12},L_{13},L_{21},L_{22},L_{31},L_{32})$
     \item $(L_{11},L_{12},L_{13},G(\lambda)L_{21},G(\mu)L_{22},L_{31}G(\lambda^{F_{11}}\mu^{F_{12}})^{-1},L_{32}G(\lambda^{F_{21}}\mu^{F_{22}})^{-1})$
    \end{itemize}
produce the same public key.
\end{lemma}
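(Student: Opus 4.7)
The plan is to mirror the argument used in Lemma~\ref{lemma:equivkeys1}, but now applied to the right-hand half of the composition, i.e.\ the blocks involving $F$ and $L_3$. I trace a generic input through both the original and the modified private key and show that every scalar multiplication introduced on the $L_2$ side is exactly cancelled on the $L_3$ side.

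First, by the commutativity of the diagram defining $G$, replacing $L_{21}$ by $G(\lambda) L_{21}$ amounts, after the identification $\FF_q^3 \cong \FF_{q^3}$, to multiplying by $\lambda$ the first $\FF_{q^3}$-component that enters the exponentiation $F$; similarly $G(\mu) L_{22}$ scales the second component by $\mu$. So if $(z_1,z_2) \in (\FF_{q^3}^*)^2$ denotes the vector fed into $F$ under the original private key, the modified key feeds $(\lambda z_1, \mu z_2)$ instead.

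Expanding the definition of the $F$-exponentiation then gives
\[
(\lambda z_1,\mu z_2)^F = \bigl(\lambda^{F_{11}}\mu^{F_{12}}\, z_1^{F_{11}} z_2^{F_{12}},\; \lambda^{F_{21}}\mu^{F_{22}}\, z_1^{F_{21}} z_2^{F_{22}}\bigr),
\]
so the two $\FF_{q^3}$-outputs of $F$ differ from the original ones by the scalar factors $\lambda^{F_{11}}\mu^{F_{12}}$ and $\lambda^{F_{21}}\mu^{F_{22}}$. Using the $G$-diagram again, these scalings transport across the iso $\FF_{q^3}\cong \FF_q^3$ into left-multiplications by $G(\lambda^{F_{11}}\mu^{F_{12}})$ and $G(\lambda^{F_{21}}\mu^{F_{22}})$ on the two $\FF_q^3$-blocks just before $L_3$ acts. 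Replacing $L_{3i}$ by $L_{3i}\,G(\cdot)^{-1}$ on each block cancels these factors, and the composition agrees with the original on every input, hence the public keys coincide.

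I do not anticipate a real obstacle here; the one thing to keep straight is the order of composition, which is why the $G(\cdot)^{-1}$ factors must be placed on the right of $L_{31}$ and $L_{32}$ (they act on the input side of those blocks). With the bookkeeping pattern of Lemma~\ref{lemma:equivkeys1} already in hand, the whole proof reduces to the one-line calculation above together with two invocations of the commutative diagram defining $G$.
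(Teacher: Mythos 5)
Your proof is correct and follows exactly the route the paper intends: the paper dispatches this lemma with a single word (``Analogously''), deferring to the proof of Lemma~\ref{lemma:equivkeys1}, and your argument is precisely that analogue spelled out --- the commutative diagram defining $G$, the identity $(\lambda z_1,\mu z_2)^F=(\lambda^{F_{11}}\mu^{F_{12}}z_1^{F_{11}}z_2^{F_{12}},\,\lambda^{F_{21}}\mu^{F_{22}}z_1^{F_{21}}z_2^{F_{22}})$, and cancellation by the right-composed factors $G(\cdot)^{-1}$ on $L_{31}$ and $L_{32}$. Your remark about the placement of the inverses on the input side of the $L_3$ blocks is exactly the right bookkeeping point, so there is nothing to correct.
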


We can use these facts to assume that the private key has a specific form

\begin{lemma}
    \label{lemma:validkey}
Every valid public key corresponds to a private key that satisfies the following form:

\begin{itemize}
    \item $L_{11}=\left(\begin{array}{cc} * & 1 \\ {*} & 0 \end{array}\right)$
    \item $L_{12}=\left(\begin{array}{cc} {*} & 1 \\ {*} & 0 \end{array}\right)$
    \item $L_{13}=\left(\begin{array}{cc} * & a \\ {*} & b \end{array}\right)$ with either $(a+bT)^{E_{23}}=1+cT$ for some $c\in \FF_q$ or $(a+bT)^{E_{23}}=T$
    \item $L_{21}= \left(\begin{array}{ccc} * & * & 1 \\
        {*} & * & 0 \\
        {*} & * & 0
                   \end{array}\right)$
   \item $L_{22}= \left(\begin{array}{ccc} 1 & * & *  \\
        0 & * & * \\
        0 & * & *
                   \end{array}\right)$

\end{itemize}
where the $*$ symbols represent arbitrary elements of $\FF_q$.

\end{lemma}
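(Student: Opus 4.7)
My plan is to apply Lemmas \ref{lemma:equivkeys1} and \ref{lemma:equivkeys2} in sequence. The first lemma modifies $L_{11}, L_{12}, L_{13}$ and collaterally $L_{21}, L_{22}$, while leaving $L_{31}, L_{32}$ untouched; the second modifies $L_{21}, L_{22}$ and collaterally $L_{31}, L_{32}$, while leaving the $L_{1i}$ untouched. So I will first use Lemma \ref{lemma:equivkeys1} to put $L_{11}, L_{12}, L_{13}$ in the prescribed form, and then use Lemma \ref{lemma:equivkeys2} to put $L_{21}, L_{22}$ in the prescribed form; because the $L_{1i}$ are untouched in the second step, it cannot disturb the work done in the first.

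For the first step, let $\xi, \psi, \eta \in \FF_{q^2}^*$ denote the second columns of $L_{11}, L_{12}, L_{13}$, viewed through the fixed isomorphism as elements of $\FF_{q^2}$ (all nonzero because the blocks are invertible). Since $H(\alpha) L_{1i}$ multiplies each column of $L_{1i}$ (as an element of $\FF_{q^2}$) by $\alpha$, I would take $\alpha := \xi^{-1}$ and $\beta := \psi^{-1}$, which makes the second columns of the new $L_{11}, L_{12}$ equal to $(1,0)^T$. The real subtlety is in the choice of $\gamma$: it must simultaneously satisfy the constraint $\alpha^{E_{21}} \gamma^{E_{23}} \in \FF_q^*$ from the lemma (which pins $\gamma^{E_{23}}$ to the coset $\xi^{E_{21}} \FF_q^*$ inside $\FF_{q^2}^*$) and the normalization requirement $(\gamma \eta)^{E_{23}} = \gamma^{E_{23}} \eta^{E_{23}} \in \{1 + cT : c \in \FF_q\} \cup \{T\}$.

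The hard part will be showing these two conditions on $\gamma$ are jointly satisfiable. The key observation is that $\{1 + cT : c \in \FF_q\} \cup \{T\}$ is a complete system of coset representatives for $\FF_{q^2}^*/\FF_q^*$: every nonzero $u + vT$ can be uniquely scaled by an element of $\FF_q^*$ to become either $1 + (v/u)T$ (when $u \neq 0$) or $T$ (when $u = 0$). Hence the single $\FF_q^*$-coset $\xi^{E_{21}} \eta^{E_{23}} \FF_q^*$, where $\gamma^{E_{23}} \eta^{E_{23}}$ is forced to land, contains exactly one canonical representative, which uniquely determines $\gamma^{E_{23}}$, and therefore $\gamma$ itself, since $x \mapsto x^{E_{23}}$ is a bijection of $\FF_{q^2}^*$ (because $E_{23}$ is a power of the characteristic $p$). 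So both conditions hold simultaneously and $L_{13}$ acquires the prescribed form.

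The second step is then straightforward: Lemma \ref{lemma:equivkeys2} imposes no constraint between $\lambda$ and $\mu$. Letting $u \in \FF_{q^3}^*$ be the third column of the updated $L_{21}$ and $v \in \FF_{q^3}^*$ the first column of the updated $L_{22}$ (both nonzero by invertibility), I would set $\lambda := u^{-1}$ and $\mu := v^{-1}$, which turns those columns into $(1,0,0)^T$ because $G(\cdot)$ acts on columns as multiplication in $\FF_{q^3}$. Since $L_{11}, L_{12}, L_{13}$ are unaffected by Lemma \ref{lemma:equivkeys2}, the private key ends up in the claimed form. The only nontrivial ingredient in the whole argument is the coset-compatibility lemma used to pin down $\gamma$; everything else is a direct application of the fact that $H(\cdot)$ and $G(\cdot)$ are the matrix forms of multiplication in $\FF_{q^2}$ and $\FF_{q^3}$ respectively.
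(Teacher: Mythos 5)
Your proposal is correct and takes essentially the same route as the paper: apply Lemma~\ref{lemma:equivkeys1} with $\alpha,\beta$ the multiplicative inverses of the second columns of $L_{11},L_{12}$, choose $\gamma$ compatibly with the constraint $\alpha^{E_{21}}\gamma^{E_{23}}\in\FF_q^*$ so as to normalize the second column of $L_{13}$, then apply Lemma~\ref{lemma:equivkeys2} with $\lambda,\mu$ the inverses of the third column of $L_{21}$ and the first column of $L_{22}$. Your coset-transversal observation (that $\{1+cT : c\in\FF_q\}\cup\{T\}$ is a complete system of representatives of $\FF_{q^2}^*/\FF_q^*$, combined with the bijectivity of $x\mapsto x^{E_{23}}$) is just a cleaner packaging of the paper's explicit computation of $\alpha^{-E_{21}}\tau^{E_{23}}=\delta+\varepsilon T$ with its two cases $\delta\neq 0$ and $\delta=0$.
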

\begin{proof}
If it comes from a secret key $(L_{11},L_{12},L_{13},L_{21},L_{22},L_{31},L_{32})$, we choose $\alpha,\beta\in\FF_{q^2}^*$ as follows:
\begin{itemize}
\item $\alpha$ is the multiplicative inverse of the element of $\FF_{q^2}$ corresponding (via the isomorphism) to the second column of $L_{11}$.
\item $\beta$ is the multiplicative inverse of the element of $\FF_{q^2}$ corresponding to the second column of $L_{12}$.
\end{itemize}

This way, $H(\alpha)L_{11}=\left(\begin{array}{cc} * & 1 \\ {*} & 0 \end{array}\right)$ and $H(\beta)L_{12} = \left(\begin{array}{cc} * & 1 \\ {*} & 0 \end{array}\right)$.

Let $\tau$ be the element of $\FF_{q^2}$ corresponding to the second column of $L_{13}$.
Assume that $\alpha^{-E_{21}}\tau^{E_{23}}=\delta+\varepsilon T$ with $\delta\neq 0 $. Choose $\gamma\in\FF_{q^2}^*$ such that  $\alpha^{E_{21}}\gamma^{E_{23}}= \delta^{-1}$ (this can be done easily by raising $\delta^{-1}\alpha^{-E_{21}}$ to the inverse of $E_{23}$ modulo $q^2-1$).
If $H(\gamma)L_{13}=\left(\begin{array}{cc} * & a \\ {*} & b \end{array}\right)$, then

\[
(a+bT)^{E_{23}}=\gamma^{E_{23}}\tau^{E_{23}}=\delta^{-1}\alpha^{-E_{21}}\tau^{E_{23}}=\delta^{-1}(\delta+\varepsilon T)=1+cT
\]
for some $c\in\FF_q$.

If $\delta=0$ we can do a similar computation using $\varepsilon$ instead of $\delta$, to get that $(a+bT)^{E_{23}}=T$.

Applying Lemma~\ref{lemma:equivkeys1} with $\alpha,\beta,\gamma\in\FF_{q^2}^*$, we can construct a private key with $L_{11},L_{12},L_{13}$ as in the statement.

Now choose $\lambda,\mu$ to be the inverse multiplicatives of the elements of $\FF_{q^3}$ corresponding to the third column of $L_{21}$ and the first column of $L_{22}$ respectively.

Applying Lemma~\ref{lemma:equivkeys2} to this
new private key we obtain one where $L_{21},L_{22}$ are also as claimed.
\end{proof}

\section{Recovering information about the special private key from the public key}

In this section we assume that the private key has the form obtained in the previous section. We show how to compute the unknown coefficients of $L_1$ from the coefficients in the public key. Moreover, we can also compute six coefficients from $L_3$.

Let's start by fixing some notation. Let $l_{11}, l_{12}$ be the elements of $\FF_{q^2}$ that correspond to the columns of $L_{11}$. As we have shown in the previous section, we can assume that $l_{12}=1$. Analogously, $l_{21}, l_{22}$ and  $l_{31}, l_{32}$ are the elements of $\FF_{q^2}$ that correspond to the columns of $L_{12}$ and $L_{13}$, respectively. As before, we can assume that $l_{22}=1$ and $l_{32}=a+b T$ where $(a+bT)^{E_{23}}$ is either $1+cT$ or $T$.

Define
\begin{itemize}
    \item $f_1+f_2 T:={l_{11}}^{E_{21}}{l_{31}}^{E_{23}}$
    \item $g_1+g_2 T:={l_{11}}^{E_{21}}{l_{32}}^{E_{23}}$
    \item $h_1+h_2 T:={l_{12}}^{E_{21}}{l_{31}}^{E_{23}}={l_{31}}^{E_{23}}$
\end{itemize}

where $f_1,f_2,g_1,g_2,h_1,h_2\in \FF_q$.

\begin{lemma}
    \label{lemma:terminosx15}
    Let $(\eta_1,\eta_2,\eta_3)$, $(\eta_4,\eta_5,\eta_6)$ be the first column of $L_{31}$ and $L_{32}$ respectively.

    If we write the encryption map as a polynomial in $x_1,\ldots, x_6$, and $(a+bT)^{E_{23}}=1+cT$, the terms that only involve the variables $x_1$ and $x_6$ in the $i$'th component are
\begin{itemize}
 \item $\eta_ic^{F_{12}}(x_1^{E_{21}}x_6^{E_{23}})^{F_{11}+F_{12}}$ for $i=1,2,3$
 \item $\eta_ic^{F_{22}}(x_1^{E_{21}}x_6^{E_{23}})^{F_{21}+F_{22}}$ for $i=4,5,6$.
\end{itemize}

If $(a+bT)^{E_{23}}=T$ there are no terms that only involve these variables.

Similarly, the terms that only involve the variables $x_1,x_5$ are

\begin{itemize}
    \item $\eta_i f_1^{F_{11}} f_2^{F_{12}} (x_1^{E_{21}}x_5^{E_{23}})^{F_{11}+F_{12}}$ for $i=1,2,3$
 \item $\eta_i  f_1^{F_{11}} f_2^{F_{12}} (x_1^{E_{21}}x_5^{E_{23}})^{F_{21}+F_{22}}$ for $i=4,5,6$
\end{itemize}

Analogous formulas hold for the terms involving $x_2,x_6$ and $x_2,x_5$, using $g_1,g_2$ and $h_1,h_2$ instead of $f_1,f_2$ respectively.

\end{lemma}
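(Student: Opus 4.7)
The plan is to chase the composition $L_3\circ F\circ L_2\circ M\circ E\circ L_1$ step by step, keeping at each stage only the monomials supported on $\{x_1,x_6\}$. By Lemma~\ref{lemma:validkey}, after $L_1$ the three $\FF_{q^2}$-valued intermediates are $y_1=l_{11}x_1+x_2$, $y_2=l_{21}x_3+x_4$, $y_3=l_{31}x_5+l_{32}x_6$. Because the second row of $E$ has nonzero entries only in columns $1$ and $3$, only the second output $z_2=y_1^{E_{21}}y_3^{E_{23}}$ can contribute $x_1,x_6$-only monomials---each monomial of $z_1$ and $z_3$ contains an $x_3$ or $x_4$. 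Since every entry of $E$ is a power of $p$, Frobenius additivity expands $z_2$ into four monomials in $x_1,x_2,x_5,x_6$, whose $x_1x_6$-part is $l_{11}^{E_{21}}l_{32}^{E_{23}}\,P$ with $P:=x_1^{E_{21}}x_6^{E_{23}}$.

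Next, I apply $M=\mathrm{id}$ and $L_2$. The block pattern read off from Lemma~\ref{lemma:equivkeys1} shows that, under the identification $\FF_q^6\cong\FF_{q^3}^2$, the three-coordinate input to $L_{21}$ is $(z_{1,1},z_{1,2},z_{2,1})$ and that of $L_{22}$ is $(z_{2,2},z_{3,1},z_{3,2})$. Only the two $z_2$-coordinates carry $x_1,x_6$-only content, and the normalized third column of $L_{21}$ together with the normalized first column of $L_{22}$ (both equal to $(1,0,0)^T$) route these contributions into the $\FF_q\cdot 1$ summand of the $\FF_{q^3}$-outputs $w_1,w_2$ in the basis $(1,S,S^2)$. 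The $z_1$ and $z_3$ contributions to $w_1,w_2$ only affect monomials touching some other variable and can be ignored.

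Applying $F$, Frobenius again (every $F_{ij}$ is a power of $p$) places the $x_1,x_6$-only part of $v_1=w_1^{F_{11}}w_2^{F_{12}}$ as a pure $\FF_q$-scalar multiple of $P^{F_{11}+F_{12}}$ inside the $\FF_q\cdot 1$ slot, and similarly for $v_2$ with $P^{F_{21}+F_{22}}$. Hence when $L_3=L_{31}\oplus L_{32}$ is applied, only the first column of each block acts on a nonzero input, producing the $\eta_i$ prefactors claimed. Substituting the normalized value of $l_{32}^{E_{23}}$ (either $1+cT$ or $T$) gives the precise scalar in each subcase, and the $x_1,x_5$ half of the lemma follows by the symmetric chase with $l_{31}$ replacing $l_{32}$, turning $(1+cT)$ into $l_{31}^{E_{23}}=h_1+h_2T$ and hence the scalar into $f_1^{F_{11}}f_2^{F_{12}}$. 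The main obstacle is the bookkeeping: correctly pinning down the $M=\mathrm{id}$ identification $\FF_q^6\cong\FF_{q^3}^2$ and matching it with $L_2$'s block structure, since only with that alignment do the normalized columns of $L_{21},L_{22}$ funnel the surviving $x_1,x_6$-only data precisely into the $1$-slot of $\FF_{q^3}$; after this, the rest is routine Frobenius expansion.
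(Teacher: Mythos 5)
Your chase is, in structure, exactly the paper's own proof: the same step-by-step tracking of the monomials supported on the chosen pair of variables through $L_1$, $E$, $L_2$, $F$, $L_3$, resting on the same three observations (only the second $E$-output $y_1^{E_{21}}y_3^{E_{23}}$ can carry such monomials, since the other two contain a factor in $x_3,x_4$; the normalized third column of $L_{21}$ and first column of $L_{22}$ route the two $\FF_q$-coordinates of that output into the $1$-slots of $w_1,w_2$; Frobenius additivity of all entries of $E$ and $F$ keeps the relevant coefficients in $\FF_q$ and hence in the first coordinate after $F$). The paper writes this out for the pair $x_1,x_5$ and declares the other cases similar; you write out $x_1,x_6$ and declare the rest symmetric. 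Up to and including your identity that the $x_1,x_6$-part of $z_2$ equals $l_{11}^{E_{21}}l_{32}^{E_{23}}\,P$, everything is correct.

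The gap is in your final step: ``substituting the normalized value of $l_{32}^{E_{23}}$ \dots gives the precise scalar'' does not follow, because the factor $l_{11}^{E_{21}}$ does not disappear --- only $l_{12}=l_{22}=1$ are normalized, not $l_{11}$. With the paper's definitions, $l_{11}^{E_{21}}l_{32}^{E_{23}}=l_{11}^{E_{21}}(1+cT)=g_1+g_2T$, so the chase you set up actually yields the scalars $g_1^{F_{11}}g_2^{F_{12}}$ and $g_1^{F_{21}}g_2^{F_{22}}$ for the $x_1,x_6$ terms, not $c^{F_{12}}$ and $c^{F_{22}}$. The $c$-formulas, and the vanishing claim in the subcase $(a+bT)^{E_{23}}=T$, are correct for the pair $x_2,x_6$, whose coefficient after $E$ is $l_{12}^{E_{21}}l_{32}^{E_{23}}=1+cT$ (resp.\ $T$, in which case the $w_1$-slot coefficient is $0$ and every monomial of $w_1^{F_{i1}}w_2^{F_{i2}}$ inherits an extraneous variable from the $w_1$ factor). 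For $x_1,x_6$ with $l_{32}^{E_{23}}=T$, the coefficient is $l_{11}^{E_{21}}T$, which generically has both coordinates nonzero, so $x_1,x_6$-only terms \emph{do} survive and the ``no terms'' claim fails for the pair you chose. In other words, the lemma as printed swaps the roles of $x_1$ and $x_2$ in the $x_6$ cases (compare the definitions $g_1+g_2T:=l_{11}^{E_{21}}l_{32}^{E_{23}}$ and $h_1+h_2T:=l_{12}^{E_{21}}l_{31}^{E_{23}}$, and the later identity $g_1+g_2T=l_{11}^{E_{21}}(1+cT)$); your write-up silently drops $l_{11}^{E_{21}}$ to force agreement with the printed statement rather than flagging this, which makes it internally inconsistent with your own $x_1,x_5$ computation, where you correctly keep that factor to obtain $f_1^{F_{11}}f_2^{F_{12}}$. (Relatedly, the scalar for components $i=4,5,6$ should be $f_1^{F_{21}}f_2^{F_{22}}$, as the paper's proof derives --- another typo in the statement worth noting.) One small attribution slip: the normalized column shapes you use come from Lemma~\ref{lemma:validkey}, not Lemma~\ref{lemma:equivkeys1}.
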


\begin{proof}
We proof the case for $x_1,x_5$, and the rest are done similarly.

Let's start with the vector $(x_1,x_2,x_3,x_4,x_5,x_6)$ and apply the steps of the encription map. After the first linear map $L_1$, we get a vector formed by stacking the three vectors $L_{11}\left(\begin{array}{c} x_1 \\ x_2 \end{array}\right)$,  $L_{12}\left(\begin{array}{c} x_3 \\ x_4 \end{array}\right)$ and $L_{13}\left(\begin{array}{c} x_5 \\ x_6 \end{array}\right)$. The elements of $\FF_{q^2}$ that correspond to these vectors are $x_1 l_{11}+x_2 l_{12}$, $x_3 l_{21}+x_4 l_{22}$ and $x_5 l_{31}+x_6 l_{32}$ respectively.

After applying the exponential maps that corresponds to $E$,we have the vectors corresponding to
\begin{itemize}
 \item $(x_1 l_{11}+x_2 l_{12})^{E_{11}} (x_3 l_{21}+x_4 l_{22})^{E_{12}}$
 \item $(x_1 l_{11}+x_2 l_{12})^{E_{21}} (x_5 l_{31}+x_6 l_{32})^{E_{23}}$
  \item $(x_3 l_{21}+x_4 l_{22})^{E_{32}} (x_5 l_{31}+x_6 l_{32})^{E_{33}}$
\end{itemize}

In the expansion of those expressions, there all terms involve variables that are not $x_1$ or $x_5$ except the in the second one, where there is the term
$x_1^{E_{21}}x_5^{E_{23}} {l_{11}}^{E_{21}}{l_{31}}^{E_{23}}$.
That is, the vector in $\FF_{q}^6$ that we get at this stage is
$$
\left(
\begin{array}{c}
\bullet \bullet \bullet\\
\bullet \bullet \bullet\\
\bullet \bullet \bullet+f_1x_1^{E_{21}}x_5^{E_{23}}\\
\bullet \bullet \bullet+f_2x_1^{E_{21}}x_5^{E_{23}}\\
\bullet \bullet \bullet\\
\bullet \bullet \bullet
\end{array}
\right)
$$
where the $\bullet \bullet \bullet$ symbols represent sums of terms that don't involve only the variables $x_1$ and $x_5$.

After applying $L_2$, we get
$$
\left(
\begin{array}{c}
\bullet \bullet \bullet+f_1x_1^{E_{21}}x_5^{E_{23}}\\
\bullet \bullet \bullet\\
\bullet \bullet \bullet\\
\bullet \bullet \bullet+f_2x_1^{E_{21}}x_5^{E_{23}}\\
\bullet \bullet \bullet\\
\bullet \bullet \bullet
\end{array}
\right).
$$

Expressed as a vector in $\FF_{q^3}^2$, this is
$$
\left(
\begin{array}{c}
\bullet \bullet \bullet+f_1x_1^{E_{21}}x_5^{E_{23}}\\
\bullet \bullet \bullet+f_2x_1^{E_{21}}x_5^{E_{23}}\\
\end{array}
\right),
$$
where again,the $\bullet \bullet \bullet$ symbols represent a sum of terms that don't involve only $x_1$ and $x_5$, but this time with coefficients in $\FF_{q^3}$, but notice that the coefficients that only involve $x_1$ and $x_5$ are actually elements of $\FF_q$.

Now, applying the exponentiation corresponding to $F$, we get the vector in $\FF_{q^3}^2$
$$
\left(
\begin{array}{c}
    \bullet \bullet \bullet+f_1^{F_{11}}f_2^{F_{12}} (x_1^{E_{21}}x_5^{E_{23}})^{F_{11}+F_{12}}\\
\bullet \bullet \bullet+f_1^{F_{21}}f_2^{F_{22}} (x_1^{E_{21}}x_5^{E_{23}})^{F_{21}+F_{22}}\\
\end{array}
\right),
$$
which corresponds to the vector in $\FF_{q}^6$
$$
\left(
\begin{array}{c}
    \bullet \bullet \bullet+f_1^{F_{11}}f_2^{F_{12}} (x_1^{E_{21}}x_5^{E_{23}})^{F_{11}+F_{12}}\\
    \bullet \bullet \bullet \\
    \bullet \bullet \bullet \\
\bullet \bullet \bullet+f_1^{F_{21}}f_2^{F_{22}} (x_1^{E_{21}}x_5^{E_{23}})^{F_{21}+F_{22}}\\
\bullet \bullet \bullet \\
\bullet \bullet \bullet
\end{array}
\right).
$$

Now, applying $L_3$ we get:
$$
\left(
\begin{array}{c}
    \bullet \bullet \bullet+f_1^{F_{11}}f_2^{F_{12}} (x_1^{E_{21}}x_5^{E_{23}})^{F_{11}+F_{12}} \eta_1\\
    \bullet \bullet \bullet+f_1^{F_{11}}f_2^{F_{12}} (x_1^{E_{21}}x_5^{E_{23}})^{F_{11}+F_{12}} \eta_2\\
    \bullet \bullet \bullet+f_1^{F_{11}}f_2^{F_{12}} (x_1^{E_{21}}x_5^{E_{23}})^{F_{11}+F_{12}} \eta_3\\
\bullet \bullet \bullet+f_1^{F_{21}}f_2^{F_{22}} (x_1^{E_{21}}x_5^{E_{23}})^{F_{21}+F_{22}}\eta_4\\
\bullet \bullet \bullet+f_1^{F_{21}}f_2^{F_{22}} (x_1^{E_{21}}x_5^{E_{23}})^{F_{21}+F_{22}}\eta_5\\
\bullet \bullet \bullet+f_1^{F_{21}}f_2^{F_{22}} (x_1^{E_{21}}x_5^{E_{23}})^{F_{21}+F_{22}}\eta_6\\
\end{array}
\right).
$$

\end{proof}

\section{Recovering the coefficients of $L_1$}
 Now we see how to recover the coefficients of $L_1$ from the public key.

Assume $c\neq 0$ (otherwise, we can detect the case because there are no terms involving only $x_1,x_6$ in the public key; we can apply a linear change of variables to fall into this case).

From the previous section, we know the following column vectors
$$
\left(\begin{array}{c}
    c^{F_{12}}\eta_1 \\
c^{F_{12}}\eta_2\\
c^{F_{12}}\eta_3\\
c^{F_{22}}\eta_4\\
c^{F_{22}}\eta_5\\
c^{F_{22}}\eta_6
      \end{array}
\right),
\left(\begin{array}{c}
    f_1^{F_{11}} f_2^{F_{12}}\eta_1 \\
    f_1^{F_{11}} f_2^{F_{12}}\eta_2\\
    f_1^{F_{11}} f_2^{F_{12}}\eta_3\\
    f_1^{F_{21}} f_2^{F_{22}}\eta_4\\
    f_1^{F_{21}} f_2^{F_{22}}\eta_5\\
    f_1^{F_{21}} f_2^{F_{22}}\eta_6
      \end{array}
\right),
\left(\begin{array}{c}
    g_1^{F_{11}} g_2^{F_{12}}\eta_1 \\
    g_1^{F_{11}} g_2^{F_{12}}\eta_2\\
    g_1^{F_{11}} g_2^{F_{12}}\eta_3\\
    g_1^{F_{21}} g_2^{F_{22}}\eta_4\\
    g_1^{F_{21}} g_2^{F_{22}}\eta_5\\
    g_1^{F_{21}} g_2^{F_{22}}\eta_6
\end{array}
\right),
\left(\begin{array}{c}
    h_1^{F_{11}} h_2^{F_{12}}\eta_1 \\
    h_1^{F_{11}} h_2^{F_{12}}\eta_2\\
    h_1^{F_{11}} h_2^{F_{12}}\eta_3\\
    h_1^{F_{21}} h_2^{F_{22}}\eta_4\\
    h_1^{F_{21}} h_2^{F_{22}}\eta_5\\
    h_1^{F_{21}} h_2^{F_{22}}\eta_6
\end{array}
\right).
$$

Taking quotients between them we an eliminate the $\eta_i$, and hence we can know the values of $f_1^{F_{11}} \left(\frac{f_2}{c}\right)^{F_{12}}$ and
$f_1^{F_{21}} \left(\frac{f_2}{c}\right)^{F_{22}}$.

Since the exponentiation map is invertible, these two values allow us to recover $f_1$ and $\frac{f_2}{c}$. Analogously, we can also recover $g_1$, $\frac{g_2}{c}$, $h_1$ and $\frac{h_2}{c}$.

Let's denote $f_2':=\frac{f_2}{c}$, $g_2':=\frac{g_2}{c}$ and $h_2':=\frac{h_2}{c}$.

Now we have the equations
$$
\begin{array}{rcccl}
f_1+cf_2'T & = & f_1+f_2T & = & {l_{11}}^{E_{21}} {l_{31}}^{E_{23}}
\\
g_1+cg_2'T & = & g_1+g_2T & = & {l_{11}}^{E_{21}} (1+cT)
\\
h_1+ch_2'T & = & h_1+h_2T & = & {l_{31}}^{E_{23}}
\end{array}
$$

So $(f_1+cf_2'T)(1+cT)=(g_1+cg_2'T)(h_1+ch_2'T)$. In this expression the only unknown is $c$. This is an equation in $\FF_{q^2}$ that translates into two cuadratic equations in $\FF_{q}$ on $c$, that must have at least one common nonzero solution. Note that one of them has no constant term, so one of its solution is zero. That is, we can determine $c$ completely.

With the value of $c$, we also get $f_2$, $g_2$ and $h_2$. So we have ${l_{31}}^{E_{23}}= h_1+h_2T$. Raising to the inverse of $E_{23}$ modulo ${q^2}-1$, we recover $l_{31}$. We can also compute ${l_{11}}^{E_{21}}=\frac{f_1+f_2T}{h_1+h_2T}$ and hence ${l_{11}}$. Moreover, from $1+cT$ we can recover $a,b$ such that $(a+bT)^{E_{23}}=1+cT$, and hence, we have completely recovered the matrices $L_{11}$ and $L_{13}$.

Since we have $c$ and $c^{F_{12}}\eta_i$, we can also compute $\eta_i$.

Sumarizing, we have proved the following:

\begin{theorem}
Given a valid public key, there exists a corresponding private key of the form:

$$
L_{11} = \left(\begin{array}{cc} * & 1 \\ {*} & 0 \end{array}\right),
L_{12} = \left(\begin{array}{cc} ? & 1 \\ ? & 0 \end{array}\right),
L_{13} = \left(\begin{array}{cc} * & * \\ {*} & * \end{array}\right),
$$
$$
L_{21} = \left(\begin{array}{ccc}
? & ? & 1 \\
? & ? & 0 \\
? & ? & 0
\end{array}\right),
L_{22} = \left(\begin{array}{ccc}
1 & ? & ? \\
0 & ? & ? \\
0 & ? & ?
\end{array}\right)
$$
$$
L_{31} = \left(\begin{array}{ccc}
* & ? & ? \\
{*} & ? & ? \\
{*} & ? & ?
\end{array}\right),
L_{32} = \left(\begin{array}{ccc}
* & ? & ? \\
{*} & ? & ? \\
{*} & ? & ?
\end{array}\right)
$$
where the coefficients marked as $*$ can be efficiently computed from the public key.

\end{theorem}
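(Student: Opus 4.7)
The plan is to combine Lemma~\ref{lemma:validkey} with the coefficient-extraction developed in the preceding sections, reading the theorem essentially as a bookkeeping statement. Lemma~\ref{lemma:validkey} already furnishes a private key in which the second columns of $L_{11},L_{12}$, the third column of $L_{21}$, the first column of $L_{22}$ and the special form of $L_{13}$ are normalized to the displayed values. Hence the actual content of the theorem is that, under this normalization, the remaining starred entries (the free columns of $L_{11}$, the full matrix $L_{13}$, and the first columns of $L_{31},L_{32}$) can be read off efficiently from the public polynomial coefficients.

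First I would split cases on the form of $(a+bT)^{E_{23}}$. In the generic case, $(a+bT)^{E_{23}}=1+cT$ with $c\ne 0$, Lemma~\ref{lemma:terminosx15} exhibits inside the public key the four six-dimensional vectors displayed above, one per monomial pair $(x_i,x_j)$ with $i\in\{1,2\}$, $j\in\{5,6\}$; each is extracted by locating the monomials $(x_i^{E_{21}}x_j^{E_{23}})^{F_{k1}+F_{k2}}$ for $k\in\{1,2\}$ in the six output coordinates. Since within each vector the top three entries share one common scalar and the bottom three share another, both independent of the $\eta_i$, pairwise ratios eliminate $\eta_1,\dots,\eta_6$ and expose the six quantities $f_1^{F_{11}}(f_2/c)^{F_{12}}$, $f_1^{F_{21}}(f_2/c)^{F_{22}}$ together with their $g$- and $h$-analogues. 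Because $F$ is invertible modulo $q^2-1$, the $F$-exponentiation can be undone to recover $f_1,\, f_2':=f_2/c,\, g_1,\, g_2':=g_2/c,\, h_1,\, h_2':=h_2/c$.

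Next I would plug these into the identity $(f_1+cf_2'T)(1+cT)=(g_1+cg_2'T)(h_1+ch_2'T)$ obtained from the definitions of $f,g,h$ and the normalization $l_{32}^{E_{23}}=1+cT$. Expanding in the $\FF_q$-basis $(1,T)$ produces two quadratic equations in the single unknown $c$ that must share a common nonzero root. This step is where the main obstacle sits: generically, two quadratics may fail to admit a unique common root, and some structural feature of the system has to be invoked. The decisive observation is that the $T$-coefficient equation has no constant term, so $c$ factors out of it and the remaining linear relation, combined with the other quadratic, pins down $c$ uniquely (the spurious root $c=0$ being discarded by the case assumption). Once $c$ is known, the values of $f_2,g_2,h_2$ follow, which in turn determine $l_{11}^{E_{21}}$, $l_{31}^{E_{23}}$ and $(a+bT)^{E_{23}}$; raising to the inverse of the relevant exponents modulo $q^2-1$ recovers $l_{11},l_{31},a,b$, and hence $L_{11}$ and $L_{13}$ in full. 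Finally, the already-known products $c^{F_{12}}\eta_i$ and $c^{F_{22}}\eta_i$ yield $\eta_1,\dots,\eta_6$, i.e.\ the first columns of $L_{31}$ and $L_{32}$.

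The degenerate case $(a+bT)^{E_{23}}=T$ is detectable from the public key by the absence of $x_1,x_6$-only terms, as noted in Lemma~\ref{lemma:terminosx15}. As remarked at the start of the recovery argument, a linear change of variables reduces it to the generic case, so no independent treatment is required. Putting everything together gives exactly the matrix pattern asserted in the theorem, with the $*$-entries obtained by an effective procedure from the public key.
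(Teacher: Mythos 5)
Your proposal follows the paper's own argument step for step---normalization via Lemma~\ref{lemma:validkey}, extraction of the four coefficient vectors from Lemma~\ref{lemma:terminosx15}, ratio elimination of the $\eta_i$, inversion of the $F$-exponentiation, the two quadratics in $c$ with the key observation that the $T$-coefficient equation lacks a constant term, back-substitution to recover $l_{11},l_{31},a,b,\eta_i$, and the same reduction of the degenerate case $(a+bT)^{E_{23}}=T$ by a linear change of variables---so it is correct and essentially identical to the paper's proof. One small slip worth fixing: $F$ is invertible modulo $q^3-1$ (not $q^2-1$), and what your inversion step actually needs is invertibility modulo $q-1$, since $f_1$ and $f_2/c$ lie in $\FF_q^*$; this follows because $q-1$ divides $q^3-1$.
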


Notice that, if we could find the two missing coefficients of $L_{12}$, we would be able to precompose the map with the inverse of $L_1$, and then with the inverse of $E$, leaving us with a map consisting only on a known matrix exponentiation map composed on both sides with two (partially) unknown linear maps.

\section{Recovering $L_2$ and $L_3$}

With the previous steps, we would only need to find two missing coefficients of $L_{12}$ to reduce the problem to attacking a weaker variant of the scheme, with the following structure in the case $3,2$:

\begin{center}
\begin{tikzpicture}
\matrix (m) [matrix of math nodes, column sep=3em, row sep=3em]
{ \FF_q^{6} & \FF_q^{6}  & \FF_q^{6} & \FF_q^{6} \\
 & \FF_{q^3}^2 & \FF_{q^2}^n \\ };
  { [start chain] \chainin (m-1-1);
      \chainin (m-1-2) [join={node[above,labeled] {L_2}}];
    \chainin (m-2-2) [join={node[left,labeled] {\cong}}];
    \chainin (m-2-3) [join={node[above,labeled] {{F}}}];
    \chainin (m-1-3) [join={node[left,labeled] {\cong}}];
    \chainin (m-1-4) [join={node[above,labeled] {L_3}}];

    }
\end{tikzpicture}
\end{center}

In this section, we see how to recover the entries of $L_3$ (and then,
trivially we get $L_2$ assuming only that we know the total composition
map, and the entries of $F$.

Now, denote by $\zeta_1,\zeta_2,\zeta_3$ the columns of $L_{21}$,
interpreted as elements of $\FF_{q^3}$, and by  $\zeta_4,\zeta_5,\zeta_6$
the columns of $L_{22}$. Analogously, the columns of $L_{31}^{-1}$ and
$L_{32}^{-1}$ will be denoted as $\vartheta_1,\vartheta_2,\vartheta_3$
and  $\vartheta_4,\vartheta_5,\vartheta_6$ respectively. Note that, by
the same kind of arguments used in Section~\ref{sec:maleability}, we can
assume that $\vartheta_3=\vartheta_6=1$.

If we apply $L_2$ to the vector $(1,0,0,1,0,0)$ we get a column vector of the form

\[
 \left(\begin{array}{c}\zeta_1\\ \zeta4 \end{array}\right)
\]

where $\zeta_1$ and $\zeta_4$ are considered now as a vector with three coordinates.

Its image by $F$ is

\[
\left(\begin{array}{c}\zeta_1^{F_{11}}\zeta_4^{F_{12}}\\ \zeta_1^{F_{21}}\zeta4^{F_{22}} \end{array}\right)
\]

And the final aplication of $L_3$ gives a vector $(z^{14}_1,\ldots , z^{14}_6)$, that is known (since it is just the result of the full map to the starting vector). Applying the inverse of $L_3$, we get that

\[
\begin{array}{rcl}\zeta_1^{F_{11}}\zeta_4^{F_{12}} & =& z^{14}_1\vartheta_1+z^{14}_2\vartheta_2+z^{14}_3 \\
\zeta_1^{F_{21}}\zeta_4^{F_{22}} & =& z^{14}_4\vartheta_4+z^{14}_5\vartheta_5+z^{14}_6
\end{array}
\]

Analogously, if we start with the vectors $(1,0,0,0,1,0)$, $(1,0,0,0,0,1)$,$(0,1,0,1,0,0)$, $(0,1,0,0,1,0)$, $(0,0,1,1,0,0)$, $(0,0,1,0,1,0)$ and $(0,0,1,0,0,1)$ and apply the same reasoning, we get the equations:

\[
\begin{array}{rcl}\zeta_1^{F_{11}}\zeta_5^{F_{12}} & =& z^{15}_1\vartheta_1+z^{15}_2\vartheta_2+z^{15}_3 \\
\zeta_1^{F_{21}}\zeta_5^{F_{22}} & =& z^{15}_4\vartheta_4+z^{15}_5\vartheta_5+z^{15}_6 \\
\zeta_1^{F_{11}}\zeta_6^{F_{12}} & =& z^{16}_1\vartheta_1+z^{16}_2\vartheta_2+z^{16}_3 \\
\zeta_1^{F_{21}}\zeta_6^{F_{22}} & =& z^{16}_4\vartheta_4+z^{16}_5\vartheta_5+z^{16}_6 \\
\zeta_2^{F_{11}}\zeta_4^{F_{12}} & =& z^{24}_1\vartheta_1+z^{24}_2\vartheta_2+z^{24}_3 \\
\zeta_2^{F_{21}}\zeta_4^{F_{22}} & =& z^{24}_4\vartheta_4+z^{24}_5\vartheta_5+z^{24}_6 \\
\zeta_2^{F_{11}}\zeta_5^{F_{12}} & =& z^{25}_1\vartheta_1+z^{25}_2\vartheta_2+z^{25}_3 \\
\zeta_2^{F_{21}}\zeta_5^{F_{22}} & =& z^{25}_4\vartheta_4+z^{25}_5\vartheta_5+z^{25}_6 \\
\zeta_2^{F_{11}}\zeta_6^{F_{12}} & =& z^{26}_1\vartheta_1+z^{26}_2\vartheta_2+z^{26}_3 \\
\zeta_2^{F_{21}}\zeta_6^{F_{22}} & =& z^{26}_4\vartheta_4+z^{26}_5\vartheta_5+z^{26}_6 \\
\zeta_3^{F_{11}}\zeta_4^{F_{12}} & =& z^{34}_1\vartheta_1+z^{34}_2\vartheta_2+z^{34}_3 \\
\zeta_3^{F_{21}}\zeta_4^{F_{22}} & =& z^{34}_4\vartheta_4+z^{34}_5\vartheta_5+z^{34}_6 \\
\zeta_3^{F_{11}}\zeta_5^{F_{12}} & =& z^{35}_1\vartheta_1+z^{35}_2\vartheta_2+z^{35}_3 \\
\zeta_3^{F_{21}}\zeta_5^{F_{22}} & =& z^{35}_4\vartheta_4+z^{35}_5\vartheta_5+z^{35}_6 \\
\zeta_3^{F_{11}}\zeta_6^{F_{12}} & =& z^{36}_1\vartheta_1+z^{36}_2\vartheta_2+z^{36}_3 \\
\zeta_3^{F_{21}}\zeta_6^{F_{22}} & =& z^{36}_4\vartheta_4+z^{36}_5\vartheta_5+z^{36}_6 \\
\end{array}
\]

where the $z^{(ij)}_k$ are known values of $\FF_q$, and the $\zeta_i$ and $
\vartheta_j$ are unknown elements of $\FF_{q^3}$. The same system of
equations could be obtained by following track of the coefficients in
the polynomial expression, instead of evaluating in particular values
(both results would be equivalent).

What we need to do now is
to solve this system of equations, for which we know that some solution exists. Moreover, we know that $(\zeta_1,\zeta_2,\zeta_3)$ are $\FF_q$-lineally independent, and so are $(\zeta_4,\zeta_5,\zeta_6)$,
$(\vartheta_1,\vartheta_2,\vartheta_3)$ and $(\vartheta_4,\vartheta_5,\vartheta_6)$. In particular, none of those elements is zero.

Doing some basic elimination of the $\zeta_i$ variables, we get some simpler subsystems of equations, for example,

\[
\begin{array}{c}( z^{14}_1\vartheta_1+z^{14}_2\vartheta_2+z^{14}_3)
( z^{25}_1\vartheta_1+z^{25}_2\vartheta_2+z^{25}_3 )=
(z^{15}_1\vartheta_1+z^{15}_2\vartheta_2+z^{15}_3 )
(z^{24}_1\vartheta_1+z^{24}_2\vartheta_2+z^{24}_3  )
\\
( z^{14}_1\vartheta_1+z^{14}_2\vartheta_2+z^{14}_3)
( z^{26}_1\vartheta_1+z^{26}_2\vartheta_2+z^{26}_3 )=
(z^{16}_1\vartheta_1+z^{16}_2\vartheta_2+z^{16}_3 )
(z^{24}_1\vartheta_1+z^{24}_2\vartheta_2+z^{24}_3  )
\end{array}
\]

which is a system of two cuadratic equations in the variables
$\vartheta_1,\vartheta_2$. It can be easily solved by taking a resultant
(which is a degree $4$ polynomial in just one variable) and then factoring
it over $\FF_{q^3}$. We will get at least one solution (since we know that
one solution must exist) and at most four possible ones.

Note that there are more possible systems of two cuadratic equations on the
same variables, so the whole system is overdetermined. We can use that to
discard some of the four possible solutions.

Analogously, we can solve for $\vartheta_4,\vartheta_5$. Then, solving for
$\zeta_1,\ldots \zeta_6$ can be done just by applying the inverse of $F$.

Once we get all the values $\vartheta_i$ and $\zeta_j$, we have effectively recovered the private key.

\section{Conclusion}

We have presented a structural attack to the DME cryptosystem with
parameters $(3,2,q)$, that is able to recover the full private key from
the public key and two extra elements of $\mathbb{F}_q$ that deppend on the private key. An exhaustive search gives an upper bound of $q^2$ to the security level. In particular, for the DME $(3,2,2^{48})$ version submitted to the NIST, this bound gives at most 96 bits of security, far less than the required $256$ bits.

The attack only uses a very small fraction of the information contained in the public key. So we suspect that a deeper analysis could provide a better method than the exhaustive search for the full recovery of the private key.

Further research would be needed to determine if other choices of
parameters are vulnerable to similar attacks.

%

%
%

\end{document}